\newtheorem{fact}[theorem]{Fact}
\begin{document}

\title{EXTENDED FINITE AUTOMATA AND DECISION PROBLEMS FOR MATRIX SEMIGROUPS} 
\author{\"{O}zlem Salehi}
\thanks{\"{O}zlem Salehi is partially supported by T\"{U}B\.{I}TAK 
	(Scientific and Technological Research Council of Turkey).} 
\author{Ahmet Celal Cem Say}

\address{Bo\v{g}azi\c{c}i University, Department of Computer Engineering, \\
	Bebek 34342 \.{I}stanbul, Turkey\\
	\email{\{ozlem.salehi,say\}@boun.edu.tr}}

%

\maketitle

\begin{abstract}
We make a connection between the subgroup membership and identity problems for matrix groups and extended finite automata. We provide an alternative proof for the decidability of the subgroup membership problem for $ 2 \times 2 $ integer matrices. We show that the emptiness problem for extended finite automata over $ 4 \times 4  $ integer matrix semigroups is undecidable. We prove that the decidability of the universe problem for extended finite automata is a sufficient condition for the decidability of the subgroup membership and identity problems.
\end{abstract}

\section{Introduction}

Among the various extensions of classical finite state automata, extended finite automata over a monoid $ M $ or $ M $-automata have been investigated both implicitly and explicitly by many researchers \cite{Co05,Ka06,MS97}. An $ M $-automaton is a nondeterministic finite automaton equipped with a register that is multiplied by an element of the monoid $ M $ at each step. The register is initialized with the identity element of the monoid and a successful computation is the one which ends in an accept state with the register being equal to the identity element. 

In this paper, our aim is to make a connection between the theory of extended finite automata and the subgroup membership and identity problems for matrix semigroups. Matrices play an important role in various areas of computation, which makes it interesting to study decision problems on matrices. Even for integer matrices of low dimension, many decision problems become non-trivial for finitely generated infinite semigroups.

Let $ S $ be a matrix semigroup finitely generated by a generating set of square matrices $ F $. The \textit{membership problem} is to decide whether or not a given matrix $ Y $ belongs to the matrix semigroup $ S $ \cite{Ma47}. Equivalently, given a finite set of matrices $  F= \{Y_1,Y_2, \dots ,Y_n\} $ and a matrix $ Y $, the problem is to determine if there exists an integer $ k \geq 1 $ and $ i_1, i_2, \dots , i_k \in \{1, \dots , n\} $ such that
$ Y_{i_1} Y_{i_2} \cdots Y_{i_k} = Y $. The identity problem is a special case of the membership problem where $ Y $ is restricted to be the identity matrix. 
 Introduced by Mihailova \cite{Mi68}, the subgroup membership problem is one of the classical decision problems in group theory. Given elements $ h_1,h_2,\dots,h_n $ and $ g $ of a group $ G $, the subgroup membership problem for $ H $ in $ G $ asks whether $ g $ belongs to the subgroup $ H $ generated by $ h_1,h_2,\dots,h_n $. Note that the subgroup membership problem for matrix groups is a special case of the membership problem. 
 
  	For $ 2\times 2 $ integer matrices, the decidability of the membership problem is proven in $ \cite{PS17} $.  For 3x3 matrices, both the identity and membership problems are still open. Undecidability of the membership problem for $ 4 \times 4 $ integer matrices is known for a long time due to a result by Mihailova \cite{Mi68} whereas the undecidability of the identity problem is proven recently in \cite{BP10,KNP17}. 

For our purposes, we define $ S $-automata or extended finite automata over semigroups, generalizing the notion of $ M $-automata from monoids to semigroups. The emptiness problem is defined as the problem of deciding whether a given machine accepts any string. For $ 2 \times 2 $ integer matrices, by using the decidability of the emptiness problem of the corresponding extended finite automata, we provide an alternative proof for the decidability of subgroup membership problem. We show that the undecidability of the identity problem for $ 4 \times 4 $ integer matrices yields the undecidability of the emptiness problem for extended finite automata over semigroups of $ 4 \times 4 $ integer matrices. We also prove some results on the the decidability of the universe problem for extended finite automata, the problem of deciding whether a given machine accepts every string. 
\section{Background}
\subsection{Preliminaries}
 
We denote by $ \mathbb{Z}^{n \times n} $ the set of $ n\times n $ matrices with integer entries. $GL(n,\mathbb{Z})$ denotes the general linear group of degree $ n $
over the ring of integers, equivalently the
group of $ n\times n $ invertible matrices with integer entries. Note
that these matrices have determinant $\pm 1$. Restricting the matrices
in $GL(n,\mathbb{Z})$ to those that have determinant 1, we obtain the
special linear group of degree $ n $ over the ring of integers,
$SL(n,\mathbb{Z})$. We denote the \textit{free group} over $r$ generators by $ \mathbf{F}_r $.

\textit{Word problem} for $ G $ is the subgroup membership problem for the trivial group generated by 1. In other words, given an element $ g \in G $, the problem is to decide whether $ g $ represents the identity element. The word problem language of $ G $ is the language $ W(G,X) $ over $ A = X \cup X^{-1}$ and consists of all words that represent the identity element of $ G $. Most of the time, the statements about word problem are independent of the generating set and in these cases the word problem language is denoted by $ W(G) $.

\subsection{$ S $-automaton}
Let $Q$ be the set of
states, where $q_0 \in Q$ denotes the initial state, $Q_a \subseteq Q$ denotes the
set of accepting states, and let $\Sigma$ be the input alphabet where $ \Sigma_{\varepsilon} =\Sigma \cup \{\varepsilon\} $.

Let $ S $ be a semigroup. An \textit{$ S $-automaton} (extended finite automaton over $ S $) is a
6-tuple
\[ V = (Q, \Sigma,S,\delta, q_0,Q_a) \]
where  the transition function $\delta$ is defined as
\[\delta: Q \times \Sigma_{\varepsilon} \rightarrow \mathbb{P}(Q\times S).\] 
$ \delta(q,\sigma) \ni (q',m) $ means that when $V$ reads the
symbol (or empty string) $\sigma \in \Sigma_{\varepsilon}$
in state $q$, it will move to state $q'$, and write $ x m $ in the register, 
where $ x $ is the old content of the register. 

An $ S $-automaton is in fact an extended finite automaton or a group/monoid automaton \cite{DM00, Ka06} where the group/monoid condition is loosened to a semigroup. In order to define the initialization and acceptance steps, we need an identity element. If $ S $ is a monoid or a group, then an identity element already exists and belongs to $ S $. Otherwise, we define 1 to be the identity element of $ S $. The register of $ V $ is initialized with the identity element $ 1 $ and an input string is accepted if, after completely reading the string, $V$ enters an accept state with the content of the register being equal to the identity element. Note that when S is not a monoid nor a group, then V can accept only the empty string. Nevertheless, we define the concept of S-automaton so that the machines in the proofs of Theorem \ref{theorem: ie} and \ref{theorem: iu} are constructed properly.

We denote by $ L(V) $ the set of accepted strings by $ V $. $ \mathfrak{L}(S) $ denotes the class of languages recognized by $ S $-automata.

Alternatively, monoid automata can be defined through rational subsets as in \cite{Gi96,Ka06} which we discuss next. 

A \textit{finite automaton} $ F $ over a monoid $ M $ is a finite directed graph whose edges are labeled by elements from $ M $. $ F $ consists of a vertex labeled as the initial vertex and a set of vertices labeled as the terminal vertices such that an element of $ M $ is accepted by $ F $ if it is the product of the labels on a path from the initial vertex to a terminal vertex. A subset of $ M $ is called \textit{rational} if its elements are accepted by some finite automaton over $ M $. 

When $ M $ is a free monoid (such as $ \Sigma^* $), then the accepted elements are words over $ \Sigma $ and the set of accepted words is a language over $ \Sigma $. Rational subsets of a free monoid are called rational (regular) languages. Note that when $ M=\Sigma^* $, then the definition coincides with the definition of a finite state automaton.

An $ M $-automaton $ V $ recognizing a language over alphabet $ \Sigma $ is a finite automaton $ F $ over the monoid $ \Sigma^* \times M $ such that the accepted elements are $ (w,1) $ where $ w \in \Sigma^* $. This is stated explicitly in the following proposition by Corson (\cite{Co05}, Proposition 2.2). The proof involves constructing an $ M $-automaton from a finite automaton over $ \Sigma^* \times M $ and vice versa.

\begin{fact}\label{fact: corson}\textup{\cite{Co05}} Let $ L $ be a language over an alphabet $ \Sigma $. Then $ L \in \mathfrak{L}(M) $ if and only if there exists a rational subset $ R \subseteq \Sigma^* \times M $ such that $ L = \{w  \in  \Sigma^∗ | wR1\} $.
\end{fact}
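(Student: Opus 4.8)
The plan is to prove both implications by translating between the two machine models, keeping careful track of the fact that the register of an $ M $-automaton is updated by \emph{right} multiplication, so that the order of the factors matches the order in which a path traverses the edges of a finite automaton over $ \Sigma^* \times M $.

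For the ``only if'' direction, suppose $ L = L(V) $ for some $ M $-automaton $ V = (Q,\Sigma,M,\delta,q_0,Q_a) $. I would build a finite automaton $ F $ over the monoid $ \Sigma^* \times M $ (whose identity is $ (\varepsilon,1) $ and whose operation is componentwise) on the same vertex set $ Q $, with initial vertex $ q_0 $ and terminal vertices $ Q_a $, by placing an edge from $ q $ to $ q' $ labeled $ (\sigma,m) $ whenever $ (q',m) \in \delta(q,\sigma) $, where $ \sigma \in \Sigma_\varepsilon $ is read as a word of length $ 0 $ or $ 1 $. A computation path of $ V $ from $ q_0 $ to some state on input $ w $ that leaves the register holding $ x $ corresponds bijectively to a path of $ F $ from $ q_0 $ to that state whose product of labels is $ (w,x) $, since in both cases one concatenates the input letters and multiplies the register contributions in path order. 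Hence the subset $ R \subseteq \Sigma^* \times M $ accepted by $ F $ is rational, and $ (w,1) \in R $ iff some computation of $ V $ on $ w $ halts in an accept state with register $ 1 $, i.e.\ iff $ w \in L $; so $ L = \{ w \in \Sigma^* \mid (w,1) \in R \} $.

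For the ``if'' direction, let $ R \subseteq \Sigma^* \times M $ be rational, witnessed by a finite automaton $ F $ over $ \Sigma^* \times M $. The issue is that an edge of $ F $ may carry a label $ (u,m) $ with $ |u| \neq 1 $, whereas an $ M $-automaton reads at most one input symbol per step. I would normalize $ F $ by replacing each edge $ p \to p' $ labeled $ (u,m) $, where $ u = a_1 \cdots a_k $, by a chain of fresh intermediate states $ p = p_0, p_1, \dots, p_k = p' $ with edges $ p_{i-1} \to p_i $ in which the first edge reads $ a_1 $ and contributes $ m $ to the register while each later edge reads $ a_i $ and contributes $ 1 $; if $ k = 0 $ the edge is replaced by a single $ \varepsilon $-edge contributing $ m $. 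This yields an $ M $-automaton $ V $ whose initial state is the initial vertex of $ F $ and whose accepting states are its terminal vertices, and $ V $ accepts exactly those $ w $ for which $ F $ has an accepting path with label product $ (w,1) $, so $ L(V) = \{ w \mid (w,1) \in R \} $.

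The delicate point — the place where one must be careful, rather than a genuinely hard step — is the bookkeeping of multiplication order: since $ M $ need not be commutative, both constructions must preserve the left-to-right order of the factors along each path, which is precisely why the $ S $-automaton update rule ``write $ xm $'' and the componentwise product on $ \Sigma^* \times M $ are compatible. A secondary point is to observe that rationality is preserved by the edge-subdivision above (subdivision does not change the set of label products) and that introducing $ \varepsilon $-edges is harmless because the model already permits $ \sigma \in \Sigma_\varepsilon $.
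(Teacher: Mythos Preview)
Your proposal is correct and follows exactly the approach the paper indicates: the paper does not give its own proof but cites Corson and remarks that ``the proof involves constructing an $M$-automaton from a finite automaton over $\Sigma^* \times M$ and vice versa,'' which is precisely the pair of translations you carry out, including the edge-subdivision needed to reduce arbitrary $\Sigma^*$-labels to single letters or $\varepsilon$.
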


\section{Decidability of the subgroup membership problem for $ \mathbb{Z}^{2 \times 2} $}

It is proven that the membership problem for subsemigroups of $ \mathbb{Z}^{2 \times 2}$ is decidable in \cite{PS17}. In this section, we provide an alternative automata theoretic proof for the decidability of the subgroup membership problem for $ \mathbb{Z}^{2 \times 2}$.

For a finite index subgroup $ H $ of some finitely generated group $ G $, it is known that $ \mathfrak{L}(H)=\mathfrak{L}(G) $ \cite{Co05}. We will go over the proof details and use Fact \ref{fact: corson} to show that given a $ G $-automaton, one can construct an $ H $-automaton recognizing the same language.

\begin{lemma}\label{lem: gh}
	Let $ G $ be a finitely generated group and let $ H $ be a subgroup of finite index. Any $ G $-automaton can be converted into an $ H $-automaton recognizing the same language.
\end{lemma}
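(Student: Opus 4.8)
The plan is to use Fact~\ref{fact: corson} to translate the problem into the language of rational subsets, and then exploit the fact that a subgroup $H$ of finite index $d$ in $G$ can be described by a finite amount of coset data. Given a $G$-automaton $V$ recognizing a language $L \subseteq \Sigma^*$, Fact~\ref{fact: corson} gives a rational subset $R \subseteq \Sigma^* \times G$ with $L = \{ w \in \Sigma^* \mid (w,1) \in R \}$. Let $F$ be a finite automaton over $\Sigma^* \times G$ accepting $R$. My goal is to build a new finite automaton $F'$ over $\Sigma^* \times H$ accepting a rational subset $R' \subseteq \Sigma^* \times H$ such that $\{w \mid (w,1) \in R'\} = L$; applying Fact~\ref{fact: corson} in the reverse direction then yields the desired $H$-automaton.

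First I would fix coset representatives $g_1 = 1, g_2, \dots, g_d$ for the right cosets $H g_i$ of $H$ in $G$, so that every $g \in G$ lies in a unique $H g_i$. The key idea is a standard product/subgroup construction: the new automaton $F'$ simulates $F$ while additionally tracking, in its finite state, which coset the $G$-component of the partial product currently lies in. Concretely, the states of $F'$ are pairs $(q, i)$ where $q$ is a state of $F$ and $i \in \{1,\dots,d\}$ records that the accumulated $G$-element is currently in $H g_i$; the start state is $(q_0, 1)$ since the accumulated element starts at $1 \in H g_1$. For each transition of $F$ from $q$ to $q'$ labeled $(u, g) \in \Sigma^* \times G$, and each coset index $i$, the element $g_i g$ lies in some coset $H g_j$, so we can write $g_i g = h\, g_j$ for a unique $h \in H$ and unique $j$; I put a transition in $F'$ from $(q,i)$ to $(q',j)$ labeled $(u, h) \in \Sigma^* \times H$. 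The accepting states of $F'$ are the pairs $(q_f, 1)$ where $q_f$ is accepting in $F$ — that is, we only accept when the $G$-component has returned to the coset $H g_1 = H$.

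The correctness argument is an induction on the length of a path in $F$: if a path in $F$ from $q_0$ to $q$ spells out $(w, g)$, then the corresponding path in $F'$ from $(q_0,1)$ reaches $(q, i)$ spelling out $(w, h)$ where $g = h g_i$ and $g \in H g_i$; in particular $g \in H$ iff $i = 1$ iff $h = g$. Hence $(w, 1) \in R$ (i.e. an accepting path of $F$ spells $(w,1)$, so $g = 1 \in H$, forcing $i=1$ and $h=1$) precisely corresponds to an accepting path of $F'$ spelling $(w, 1)$, giving $\{w \mid (w,1)\in R'\} = L$ as required. The resulting $R' \subseteq \Sigma^* \times H$ is rational since $F'$ is a finite automaton over $\Sigma^* \times H$, and Fact~\ref{fact: corson} converts it back into an $H$-automaton recognizing $L$.

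The step I expect to be the main obstacle — or at least the one requiring the most care — is making the coset bookkeeping precise and well-defined: one must check that the map $(i, g) \mapsto (j, h)$ with $g_i g = h g_j$ is genuinely well-defined using that the $g_i$ form a transversal, and that composing such steps along a path composes correctly (the $h$'s multiply in $H$ and the coset index updates consistently). This is exactly the classical argument behind $[G:H]<\infty \Rightarrow$ the Reidemeister–Schreier rewriting, so no new difficulty arises, but the notational care in threading the transversal through the automaton construction is the crux of a clean proof.
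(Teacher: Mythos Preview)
Your argument is correct, but it proceeds differently from the paper's proof. The paper does not build the $H$-automaton directly by coset tracking; instead it lifts the rational subset $R\subseteq\Sigma^*\times G$ to a rational subset $R_0\subseteq\Sigma^*\times A^*$ over the free monoid on $A=X\cup X^{-1}$, invokes the known fact that $W(G)\in\mathfrak{L}(H)$ when $[G:H]<\infty$ (Corson, Lemma~2.4) to obtain a rational $S\subseteq A^*\times H$ with $W(G)=\{w\mid wS1\}$, and then takes the composition $R_0\circ S\subseteq\Sigma^*\times H$, appealing to the standard result that rational relations are closed under composition. Your construction, by contrast, is a single self-contained Schreier/Reidemeister--Schreier style product: multiply the state set by a right transversal and rewrite each $G$-label as an $H$-label plus a coset update. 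The paper's route is more modular (it simply composes two black boxes and cites existing literature for each), while yours is more elementary and explicit, and in particular does not need the intermediate passage through $A^*$ or the separate lemma that the word problem of $G$ is recognizable by an $H$-automaton; indeed your construction essentially re-proves that lemma inline.
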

\begin{proof}
Let $ X $ be the generator set for $ G $ and let $ A=X \cup X^{-1} $. Let $ V $ be a $ G $-automaton recognizing language $ L $ over alphabet $ \Sigma $. Then there exists a rational subset $ R \subseteq \Sigma^* \times G $ such that $ L=\{w \in \Sigma^* |wR1 \} $. One can define the elements of $ G $ in terms of $ A $ to obtain a rational subset $ R_0 \subseteq \Sigma^* \times A^* $.
	
Since $ H $ has finite index in $ G $, $ W(G) \in \mathfrak{L}(H)$ (\cite{Co05} Lemma 2.4). It follows that there exists a rational subset $ S \subseteq A^* \times H $ such that $ W(G)=\{w \in A^* | wS1\} $. 
	
	 Then the composition $ R_0 \circ S $ is a rational subset of $ \Sigma^* \times H $ and it follows that $ L=\{w \in \Sigma^* | w(R_0 \circ S)1\} $ (\cite{Co05}, Theorem 3.1). The detailed construction of the finite automaton recognizing the composition is given in (\cite{Gi96}, Theorem 5.3). Hence a finite automaton $ F $ over $ \Sigma^* \times H $ recognizing $ L $ exists, from which an $ H $-automaton $ V' $ recognizing $ L $ can be constructed.  
\end{proof}

The following construction of a pushdown automaton simulating an $ \mathbf{F}_2 $-automaton is left as an exercise in \cite{Ka06}. We present here some details of the construction.

\begin{lemma}\label{lemma: f2pda}
	Any $ \mathbf{F}_2 $-automaton can be converted into a pushdown automaton recognizing the same language.  
\end{lemma}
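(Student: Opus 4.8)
The plan is to simulate the free group register of an $\mathbf{F}_2$-automaton using a pushdown stack, exploiting the fact that elements of $\mathbf{F}_2$ have unique reduced words over the generators $\{a, b, a^{-1}, b^{-1}\}$, and that a reduced word is precisely a string over this four-letter alphabet containing no factor $xx^{-1}$ or $x^{-1}x$. So the stack alphabet will be $\{a, b, a^{-1}, b^{-1}\}$ (plus a bottom-of-stack marker), and the invariant maintained throughout the simulation will be: the stack content, read from bottom to top, spells the reduced word representing the current register value of the $\mathbf{F}_2$-automaton. The initial register value is the identity $1$, which corresponds to the empty reduced word, i.e.\ an empty stack (just the bottom marker).

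**First I would** describe how a single transition of the $\mathbf{F}_2$-automaton is simulated. When $V$ takes a transition $\delta(q,\sigma) \ni (q', m)$, the register is updated from $x$ to $xm$; writing $m$ itself as a reduced word $m = c_1 c_2 \cdots c_\ell$ over the generators (a fixed finite word, since $m$ is part of the machine description), the PDA processes the letters $c_1, \dots, c_\ell$ one at a time, using $\varepsilon$-moves and intermediate states to hold the remaining letters of $m$. Pushing a single generator $c$ onto a stack whose top is $d$: if $d = c^{-1}$ (or $c = d^{-1}$), the product cancels, so the PDA pops $d$; otherwise it pushes $c$. This requires peeking at the top of stack, which a PDA does in the standard way (pop $d$, then either push nothing, or push $d$ then $c$, branching on $d$). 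One subtlety: if the stack is empty (only the bottom marker showing), then $c$ is simply pushed. This single-generator step correctly maintains the reduced-word invariant because concatenating a reduced word with one generator and reducing affects at most the last letter.

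**The acceptance condition** is handled as follows: the PDA accepts when $V$ would accept, namely when $V$ is in an accepting state of $Q_a$ and its register equals $1$. By the invariant, the register is $1$ exactly when the reduced word is empty, i.e.\ when the stack contains only the bottom marker. So the PDA, upon reaching a state corresponding to some $q \in Q_a$ after consuming all of the input, makes an $\varepsilon$-transition to a special accepting state provided the stack shows only the bottom marker (equivalently, one accepts by empty stack after removing the marker, which is equivalent to acceptance by final state for PDAs). The input alphabet letters of the PDA are exactly those of $V$; the PDA reads $\sigma \in \Sigma$ (or makes an $\varepsilon$-move when $V$ does) precisely when $V$ does. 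Thus $L(\text{PDA}) = L(V)$.

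**The main obstacle**, such as it is, is bookkeeping rather than conceptual: a PDA can only inspect and modify the single top stack symbol per move, so "multiply the register by $m$" must be compiled into a sequence of $|m|$ single-generator push-or-cancel gadgets, each of which must itself branch on the current top symbol; I would handle this by enlarging the state set of the PDA to $Q \times \{\text{suffixes of the reduced words } m \text{ appearing in } \delta\}$ together with a few auxiliary control states for the peek-and-push gadget, all of which is finite since $\delta$ is finite. A second minor point worth stating explicitly is why unique reduced-word representatives suffice: two transition sequences of $V$ yield the same register value iff the corresponding concatenated generator words reduce to the same reduced word, which is exactly the content of the stack under our invariant, so no ambiguity arises. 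With these gadgets in place, a routine induction on the length of the computation establishes the invariant, and the language equality follows.
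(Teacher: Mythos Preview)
Your proposal is correct and follows the same high-level strategy as the paper: decompose each register label $m$ into a word over the generators, process those generators one at a time via a chain of auxiliary states, and let the stack encode the current register value, with the empty stack standing for the identity and acceptance by empty stack in an accepting state.

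The one substantive difference is in the stack alphabet and the per-generator step. The paper takes the stack alphabet to be $X=\{a,b\}$ and simply says ``$f_i\in\{a,b\}$: push; $f_i\in\{a^{-1},b^{-1}\}$: pop the matching letter.'' Read literally this only works along computation paths on which every prefix of the register word reduces to a \emph{positive} word; a register value such as $a^{-1}$ has no representation on a $\{a,b\}$-stack. Your construction avoids this by taking the stack alphabet to be $\{a,b,a^{-1},b^{-1}\}$ (plus a bottom marker) and implementing the peek-then-cancel-or-push gadget, so that the stack always holds the reduced word for the current register value. This is exactly the refinement the paper's sketch needs, and your explicit reduced-word invariant is what makes the correctness argument go through cleanly.
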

\begin{proof}
	Let $ V $ be an $ \mathbf{F}_2 $-automaton recognizing language $ L $ over $ \Sigma $ with the state set $ Q $ and let $ X=\{a,b\} $ be the generator set for $ \mathbf{F}_2 $. Let us construct a pushdown automaton $ V' $ recognizing the same language with the stack alphabet $ X $. Let $ (q',f) \in \delta (q,\sigma) $ be a transition of $ V $ where $ q,q' \in Q $, $ \sigma \in \Sigma_{\varepsilon} $ and $ f \in \mathbf{F}_2 $ such that $ f=f_1f_2\dots f_n$ where $ f_i \in A=X \cup X^{-1} $ for $ i=1\dots n $. In $ V', $ we need an extra $ n$ states $ q_1\dots q_{n} \notin Q$ to mimic each given transition of $ V $. If $ f_i =a  $ or $ f_i =b $, then this corresponds to pushing $ a $ or $ b $ to the stack, respectively. Similarly, if $ f_i =a^{-1}  $ or $ f_i =b^{-1} $, then $ V' $ pops $ a $ or $ b $ from the stack. Each single transition of $ V $ is accomplished by the pushdown automaton $ V' $ by going through the extra states and pushing and popping symbols. Initially, the register of $ V $ is initialized with the identity element of $ \mathbf{F}_2 $, which corresponds to the stack of $ V' $ being empty. The acceptance condition of $ V $, which is ending in an accept state with the register being equal to the identity element is realized in $ V' $ by starting with an empty stack and accepting with an empty stack in an accept state. We conclude that $ V' $ recognizes language $ L $.   
\end{proof}

\begin{theorem}\label{theorem: me}
	Let $ H $ be a finitely generated subgroup of $ G $. If the emptiness problem for $ G $-automata is decidable, then the subgroup membership problem for $ H $ in $ G $ is decidable.
\end{theorem}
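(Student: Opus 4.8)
The plan is to give a direct reduction from the subgroup membership problem to the emptiness problem for $G$-automata. Given an instance consisting of $h_1,\dots,h_n \in G$ generating the subgroup $H$ and a query element $g \in G$, I would effectively construct a $G$-automaton $V$ with $L(V)\neq\emptyset$ if and only if $g\in H$; feeding $V$ to the assumed decision procedure for emptiness then decides whether $g\in H$.

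For the construction, take the input alphabet $\Sigma=\{a_1,\bar a_1,\dots,a_n,\bar a_n\}$, one pair of symbols per generator, and let $V$ have two states: an initial state $q_0$ and a single accepting state $q_1$. At $q_0$ install self-loops $\delta(q_0,a_i)\ni(q_0,h_i)$ and $\delta(q_0,\bar a_i)\ni(q_0,h_i^{-1})$ for each $i$, together with one $\varepsilon$-move $\delta(q_0,\varepsilon)\ni(q_1,g^{-1})$; there are no other transitions, so $q_1$ is a sink and there is nothing to do once it is entered. Because the register is initialized with the identity and is multiplied on the right at every step, after reading a word $w=\sigma_1\cdots\sigma_m$ along the self-loops the register holds exactly the element of $G$ obtained by reading $w$ as a word over $\{h_1^{\pm1},\dots,h_n^{\pm1}\}$, and then the $\varepsilon$-move right-multiplies that element by $g^{-1}$.

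It then remains to check the equivalence $L(V)\neq\emptyset \iff g\in H$. For the forward direction, an accepting computation of $V$ must consume all of its input via the self-loops at $q_0$ (the only edge reaching the accepting sink $q_1$ is the $\varepsilon$-move, which reads nothing, and $q_0$ itself is not accepting), so the accepted word spells some product $p$ of the $h_i^{\pm1}$ with $p\,g^{-1}=1$, whence $g=p\in H$. Conversely, if $g\in H$ then $g=h_{i_1}^{\epsilon_1}\cdots h_{i_k}^{\epsilon_k}$ for some $k\ge 0$ and $\epsilon_j\in\{1,-1\}$, since in a group the subgroup generated by a set consists of all finite products of its elements and their inverses; the associated word over $\Sigma$ (the empty word when $k=0$, corresponding to $g=1$) drives $V$ to register value $p\,g^{-1}=g\,g^{-1}=1$ in state $q_1$ and is therefore accepted. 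Since the transition labels of $V$ are just the given group elements and their formal inverses, $V$ is produced effectively from $(h_1,\dots,h_n,g)$.

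I do not expect a genuine obstacle here: the theorem is a bookkeeping reduction, and the only subtleties are semantic --- ensuring the register is multiplied in the order that matches how $V$ reads its input, so that words of $V$ correspond faithfully to products in $G$, and handling the degenerate empty-product case, which is harmless precisely because $H$ is a subgroup and hence contains the identity. (The real work in the paper lies in the complementary direction: combining Lemmas \ref{lem: gh} and \ref{lemma: f2pda} to show that the emptiness problem for $GL(2,\mathbb{Z})$-automata is in fact decidable, so that this theorem can actually be applied to $\mathbb{Z}^{2 \times 2}$.)
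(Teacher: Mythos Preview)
Your proposal is correct and follows essentially the same approach as the paper: both construct a two-state $G$-automaton whose language is nonempty precisely when $g\in H$, using one distinguished edge carrying $g^{\pm 1}$ and self-loops carrying the generators of $H$. The differences are cosmetic --- the paper puts the $g$-edge first (from the initial to the accepting state) and then loops on the $h_i$ at the accept state over a unary alphabet, whereas you loop on $h_i^{\pm 1}$ at the initial state and finish with an $\varepsilon$-edge carrying $g^{-1}$; your explicit inclusion of the inverse loops makes the ``$g\in H\Rightarrow L(V)\neq\emptyset$'' direction slightly cleaner than the paper's version.
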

\begin{proof}
	The subgroup membership problem for $ H $ in $ G $ is the problem of deciding whether a given element $ g \in G $ belongs to $ H $. We are going to construct a $ G $-automaton $ V_1 $ and show that $g \in H $ iff $ L(V_1) $ is nonempty.	
	$ V_1 $ has two states: the initial state $ q_1 $ and the accept state $ q_2 $. The transition function of $ V_1 $ is defined as $ \delta(q_1,a)=(q_2,g) $ and $ \delta(q_2,a)=(q_2,h_i) $ for each $ i=1\dots n $ where the set $\{h_1,\dots,h_n\}  $ generates $ H $.

		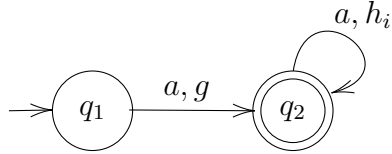
\begin{figure}[h]
		\centering

		\label{fig:me}

	\begin{tikzpicture}[x=0.75pt,y=0.75pt,yscale=-1,xscale=1]
	
	\draw    (190, 110) circle [x radius= 20, y radius= 20]  ;
	\draw    (290, 110) circle [x radius= 20, y radius= 20]  ;
	\draw    (290, 110) circle [x radius= 16, y radius= 16]  ;
	\draw    (208.3,109.6) -- (271,110) ;
	\draw [shift={(271,110)}, rotate = 180.37] [color={rgb, 255:red, 0; green, 0; blue, 0 }  ]   (0,0) .. controls (3.31,-0.3) and (6.95,-1.4) .. (10.93,-3.29)(0,0) .. controls (3.31,0.3) and (6.95,1.4) .. (10.93,3.29)   ;
	
	\draw    (290.15,90) .. controls (297.3,44) and (353.3,90) .. (309.3,101) ;
	\draw [shift={(309.3,101)}, rotate = 343.28999999999996] [color={rgb, 255:red, 0; green, 0; blue, 0 }  ]   (0,0) .. controls (3.31,-0.3) and (6.95,-1.4) .. (10.93,-3.29)(0,0) .. controls (3.31,0.3) and (6.95,1.4) .. (10.93,3.29)   ;
	
	\draw    (148.3,110) -- (170,109.6) ;
	\draw [shift={(170,109.6)}, rotate = 538.94] [color={rgb, 255:red, 0; green, 0; blue, 0 }  ]   (0,0) .. controls (3.31,-0.3) and (6.95,-1.4) .. (10.93,-3.29)(0,0) .. controls (3.31,0.3) and (6.95,1.4) .. (10.93,3.29)   ;

	\draw (190,110) node   {$q_{1}$};
	\draw (290,110) node   {$q_{2}$};
	\draw (237,100) node   {$a ,g$};
	\draw (325,62) node   {$a,h_{i}$};

	\end{tikzpicture}
	
\caption{State transition diagram of $ V_1 $}
	\end{figure}
	
	If $ g \in H $, then it is also true that $ g^{-1} \in H $ since $  H$ is a group. There exists an integer $ k \geq 1 $ and $ i_1, i_2, \dots , i_k \in \{1, \dots , n\} $ such that $ h_{i_1} h_{i_2} \cdots h_{i_k} = g^{-1} $. The string $ a^k $ is accepted by $V_1$ as the register is initially multiplied by $ g$ and there exists a product of elements yielding $ g^{-1} $, from which we can conclude that the identity element can be obtained through a series of transitions of the machine $ V_1 $. Hence, we can conclude that $ L(V_1) $ is nonempty. 
	
	For the other direction, assume that $ L(V_1) $ is nonempty, which means that some input string is accepted by $ V_1 $. Since the acceptance condition requires that the product of the elements multiplied by the register of $ V_1 $ is equal to the identity element and the register is initially multiplied by $ g $, we can conclude that $ H $ contains $ g^{-1}$. Since $  H $ is a group, $ g \in  H $ as well. 
	
	Now suppose that the emptiness problem for $ G $-automaton is decidable. Then one can check if $ g $ is an element of $ H $ by constructing $ V_1 $ and checking if $ L(V_1) $ is nonempty. Hence, the subgroup membership problem for $ H $ is also decidable.   	
\end{proof}

\begin{theorem}
Given a matrix $ Y $ from $ \mathbb{Z}^{2 \times 2} $ and a subgroup $ H$ of $ \mathbb{Z}^{2 \times 2} $, it is decidable whether $ Y $ belongs to $ H $.	
\end{theorem}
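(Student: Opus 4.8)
The plan is to reduce the statement to subgroup membership inside the group $GL(2,\mathbb{Z})$, and then to invoke Theorem~\ref{theorem: me} after establishing that the emptiness problem for $GL(2,\mathbb{Z})$-automata is decidable.

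First I would dispose of the degenerate cases. A subgroup $H$ of the multiplicative monoid $\mathbb{Z}^{2 \times 2}$ has an identity element $e$ with $e^2 = e$, and every element of $H$ has the same rank as $e$. If $e$ has rank $0$, then $H = \{0\}$ and membership of $Y$ is trivial; if $e$ has rank $1$, then the corner ring $e\,\mathbb{Z}^{2\times 2}\,e$ is an order in $e\,M_2(\mathbb{Q})\,e \cong \mathbb{Q}$, hence isomorphic to $\mathbb{Z}$, so its group of units — which contains $H$ — has order at most two, and membership is again trivial; and if $e$ has rank $2$, then $e = I$, so $H \le GL(2,\mathbb{Z})$ and in particular $Y \notin H$ whenever $\det Y \notin \{+1,-1\}$. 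Since the rank of $e$ can be read off from any single generator of $H$, it remains to decide, for a finitely generated subgroup $H \le GL(2,\mathbb{Z})$ and a matrix $Y \in GL(2,\mathbb{Z})$, whether $Y \in H$.

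By Theorem~\ref{theorem: me} with $G = GL(2,\mathbb{Z})$, this reduces to showing that the emptiness problem for $GL(2,\mathbb{Z})$-automata is decidable. For this I would use the classical fact that $GL(2,\mathbb{Z})$ is virtually free: it contains a finite-index subgroup $F$ that is a finitely generated free group, say $F \cong \mathbf{F}_r$ (for example, $SL(2,\mathbb{Z})$ has index two in $GL(2,\mathbb{Z})$ and contains the free rank-two group $\overline{\Gamma(2)}$). Given a $GL(2,\mathbb{Z})$-automaton, Lemma~\ref{lem: gh} produces an $F$-automaton recognizing the same language. Since $\mathbf{F}_r$ embeds into $\mathbf{F}_2$, we may regard $F$ as a subgroup of $\mathbf{F}_2$, and then an $F$-automaton is literally an $\mathbf{F}_2$-automaton for the same language (its register operations are multiplications by elements of $\mathbf{F}_2$ and its accepting condition, register equal to the identity, is unchanged). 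Lemma~\ref{lemma: f2pda} then turns this into a pushdown automaton recognizing that language. As emptiness for pushdown automata is decidable, the chain of effective constructions gives decidability of emptiness for $GL(2,\mathbb{Z})$-automata, and Theorem~\ref{theorem: me} completes the argument.

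The real content is carried by the two lemmas already in hand: the main obstacle is Lemma~\ref{lem: gh}, which rests on the rational-subset characterization of Fact~\ref{fact: corson} together with effective composition of rational relations, the technically heaviest ingredient. The only genuinely new input is the virtual freeness of $GL(2,\mathbb{Z})$ and a concrete choice of finite-index free subgroup; if one wanted the proof fully self-contained this is the step to spell out, but it is standard. The rank-$0$ and rank-$1$ cases are elementary once the corner-ring observation is in place.
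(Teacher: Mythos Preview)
Your proposal is correct and follows essentially the same route as the paper: reduce to $GL(2,\mathbb{Z})$, use that $GL(2,\mathbb{Z})$ is virtually free together with Lemma~\ref{lem: gh} and Lemma~\ref{lemma: f2pda} to convert to a pushdown automaton, and conclude via decidability of PDA emptiness and Theorem~\ref{theorem: me}. The only cosmetic difference is that the paper strips non-invertible edges from a $\mathbb{Z}^{2\times 2}$-automaton to land in $GL(2,\mathbb{Z})$, whereas you instead dispose of rank-$0$ and rank-$1$ idempotents on the subgroup side before working in $GL(2,\mathbb{Z})$; your treatment of those degenerate cases is in fact more careful than the paper's.
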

\begin{proof}
We are going to show that the emptiness problem for $ \mathbb{Z}^{2 \times 2 } $-automaton is decidable and use Theorem \ref{theorem: me} to conclude the result. 

Suppose that a $ \mathbb{Z}^{2 \times 2 } $-automaton $ V $ is given. When $ V $ processes an input string, its register is initialized by the identity matrix and multiplied by matrices from $ \mathbb{Z}^{2\times 2} $. Suppose that in a successful computation leading to acceptance, the register is multiplied by some non-invertible matrix $ Y $. Since $ Y $ is non-invertible, the register can not be equal to the identity matrix again and such a computation can not be successful. Any such edges labeled by a non-invertible matrix can be removed from $ V $, without changing the accepted language. We can conclude that the matrices multiplied by the register are invertible and belong to $ GL(2,\mathbb{Z}) $ and $ V $ is in fact a $ GL(2,\mathbb{Z}) $-automaton.

Since $ \mathbf{F}_2 $ has finite index in $ GL(2,\mathbb{Z}) $, one can construct an $ \mathbf{F}_2 $-automaton recognizing $ L(V) $ by Lemma $ \ref{lem: gh} $. The $ \mathbf{F}_2 $-automaton can be converted to a pushdown automaton $ V' $ using the procedure described in Lemma \ref{lemma: f2pda}. Since the emptiness problem for pushdown automata is known to be decidable, we conclude that the emptiness problem for $ \mathbb{Z}^{2 \times 2 } $-automata is also decidable since a $ \mathbb{Z}^{2 \times 2 } $-automaton can be converted to a pushdown automaton. Then by Theorem \ref{theorem: me}, the result follows.
\end{proof}

\section{Undecidability of the emptiness problem for $ \mathbb{Z}^{4 \times 4}$-automata}

In \cite{KNP17}, it is proven that the identity problem is undecidable for a semigroup generated by eight $ 4\times 4 $ integer matrices. Using this fact, we prove that the emptiness problem is undecidable for the corresponding semigroup automaton. 
 
\begin{theorem}\label{theorem: ie}
	Let $ S $ be a finitely generated semigroup. If the emptiness problem for $ S $-automata is decidable, then the identity problem for $ S $ is decidable.
\end{theorem}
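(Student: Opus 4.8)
The plan is to mimic the construction used in Theorem~\ref{theorem: me}, building from a generating set of $S$ an $S$-automaton whose language is nonempty exactly when the identity problem has a positive answer. Suppose $S$ is generated by $\{Y_1,\dots,Y_n\}$; the identity problem asks whether there exist $k\geq 1$ and indices $i_1,\dots,i_k$ with $Y_{i_1}Y_{i_2}\cdots Y_{i_k}$ equal to the identity element.

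First I would define an $S$-automaton $V$ over the one-letter alphabet $\{a\}$ with two states: the initial state $q_1$ and the single accept state $q_2$, together with transitions $\delta(q_1,a)\ni(q_2,Y_i)$ and $\delta(q_2,a)\ni(q_2,Y_i)$ for every $i=1,\dots,n$. Thus every computation of $V$ on $a^k$ with $k\geq 1$ follows a path $q_1\to q_2\to\cdots\to q_2$ while the register accumulates a product $Y_{i_1}\cdots Y_{i_k}$ for some choice of indices, and $a^k$ is accepted precisely when that product equals the identity element. Since $q_1$ is not accepting and there are no $\varepsilon$-transitions, the empty string is never accepted; this is exactly what enforces the requirement $k\geq 1$ in the definition of the identity problem, and it is the one place where a little care is needed (were $q_1$ also accepting, $V$ would trivially accept $\varepsilon$).

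Next I would verify the equivalence. If $Y_{i_1}\cdots Y_{i_k}$ equals the identity element for some $k\geq 1$, then reading $a^k$ along the path through $q_2$ that multiplies the register successively by $Y_{i_1},\dots,Y_{i_k}$ leaves the register equal to the identity element in the accept state, so $a^k\in L(V)$ and $L(V)\neq\emptyset$. Conversely, if $L(V)\neq\emptyset$ then some $a^k$ with $k\geq 1$ is accepted; the corresponding computation multiplies the register, initialized at the identity element, by a sequence of generators $Y_{i_1},\dots,Y_{i_k}$ and ends with the register equal to the identity element, hence $Y_{i_1}\cdots Y_{i_k}$ equals the identity element. One should also observe that if $S$ is neither a monoid nor a group, no nonempty product of the generators can equal the adjoined identity $1$, so both sides of the equivalence are vacuously false and the construction still behaves correctly; this is precisely the situation for which $S$-automata were defined over semigroups rather than monoids.

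Finally, if the emptiness problem for $S$-automata is decidable, then given the generators $\{Y_1,\dots,Y_n\}$ one constructs $V$ and tests whether $L(V)=\emptyset$, and the answer decides the identity problem for $S$. I do not expect a genuine obstacle here: the argument is essentially the identity-element specialization of Theorem~\ref{theorem: me}, the only delicate point being the two-state design that rules out acceptance of $\varepsilon$ and thereby encodes the $k\geq 1$ condition.
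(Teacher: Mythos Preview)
Your proposal is correct and follows essentially the same construction as the paper: a two-state $S$-automaton over a one-letter alphabet with transitions $\delta(q_1,a)\ni(q_2,s_i)$ and $\delta(q_2,a)\ni(q_2,s_i)$, so that $L(V)\neq\emptyset$ iff some nonempty product of generators equals the identity. Your additional remarks about excluding $\varepsilon$ to enforce $k\geq 1$ and about the non-monoid case make explicit points the paper leaves implicit, but the argument is the same.
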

\begin{proof}
	We are going to construct an $ S $-automaton $ V_2 $ and show that $ S $ contains the identity element iff $ L(V_2) $ is nonempty. $ V_2$ has two states: the initial state $ q_1 $ and the accept state $ q_2 $. The transition function of $ V_2$ is defined as $ \delta(q_1,a)=(q_2,s_i) $ and $ \delta(q_2,a)=(q_2,s_i) $  for each $ i=1\dots n $ where $ \{s_1,s_2,\dots s_n \} $ is the generator set for $ S $. 
	
	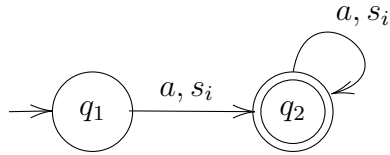
\begin{figure}[h]
		\centering
	\begin{tikzpicture}[x=0.75pt,y=0.75pt,yscale=-1,xscale=1]
	
	\draw    (190, 110) circle [x radius= 20, y radius= 20]  ;
	\draw    (290, 110) circle [x radius= 20, y radius= 20]  ;
	\draw    (290, 110) circle [x radius= 16, y radius= 16]  ;
	\draw    (208.3,109.6) -- (271,110) ;
	\draw [shift={(271,110)}, rotate = 180.37] [color={rgb, 255:red, 0; green, 0; blue, 0 }  ]   (0,0) .. controls (3.31,-0.3) and (6.95,-1.4) .. (10.93,-3.29)(0,0) .. controls (3.31,0.3) and (6.95,1.4) .. (10.93,3.29)   ;
	
	\draw    (290.15,90) .. controls (297.3,44) and (353.3,90) .. (309.3,101) ;
	\draw [shift={(309.3,101)}, rotate = 343.28999999999996] [color={rgb, 255:red, 0; green, 0; blue, 0 }  ]   (0,0) .. controls (3.31,-0.3) and (6.95,-1.4) .. (10.93,-3.29)(0,0) .. controls (3.31,0.3) and (6.95,1.4) .. (10.93,3.29)   ;
	
	\draw    (148.3,110) -- (170,109.6) ;
	\draw [shift={(170,109.6)}, rotate = 538.94] [color={rgb, 255:red, 0; green, 0; blue, 0 }  ]   (0,0) .. controls (3.31,-0.3) and (6.95,-1.4) .. (10.93,-3.29)(0,0) .. controls (3.31,0.3) and (6.95,1.4) .. (10.93,3.29)   ;

	\draw (190,110) node   {$q_{1}$};
	\draw (290,110) node   {$q_{2}$};
	\draw (237,100) node   {$a ,s_i$};
	\draw (325,62) node   {$a,s_i$};

	\end{tikzpicture}
		\caption{State transition diagram of $ V_2 $}
		\label{fig:ie}
	\end{figure}
	
	If $ S $ contains the identity element, then there exists an integer $ k \geq 1 $ and $ i_1, i_2, \dots , i_k \in \{1, \dots , n\} $ such that $ s_{i_1} s_{i_2} \cdots s_{i_k} = 1 $. Then the string $ a^k $ is accepted by $ V_2$ as there exists a product of elements yielding the identity element and this product can be obtained by a series of transitions. Hence, we can conclude that $ L(V_2) $ is nonempty. For the converse, suppose that $ L(V_2)$ is nonempty, which means that some input string is accepted by $ V_2 $. Since the acceptance condition requires that the product of the elements multiplied by the register of $ V_2 $ is equal to the identity element, we can conclude that $ S $ contains the identity element.    
	
	Now suppose that the emptiness problem for $ S $-automaton is decidable. Then one can check if $ S $ contains the identity element by constructing $ V_2 $ and checking if $ L(V_2) $ is nonempty. Hence, the identity problem for $ S $ is also decidable.   	
\end{proof}

The identity problem for $ \mathbb{Z}^{4 \times 4} $ is shown to be undecidable for a semigroup of 48 matrices in \cite{BP10}. Later on, the result is improved to eight matrices in \cite{KNP17}.

\begin{fact}\label{fact: 4id}\textup{\cite{KNP17}}
	Given a semigroup $ S $ generated by eight $ 4\times 4 $ integer matrices, determining whether
	the identity matrix belongs to $ S $ is undecidable.
\end{fact}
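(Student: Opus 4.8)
Since the statement is quoted from \cite{KNP17}, I only sketch how such an undecidability result is obtained. The plan is to reduce an undecidable word-combinatorial problem over a free group to the identity problem. A convenient intermediate problem is the \emph{Identity Correspondence Problem} (ICP): given a finite list of pairs of reduced words $(u_1,v_1),\dots,(u_m,v_m)$ over $\{a,b,a^{-1},b^{-1}\}$, decide whether there exist $k\ge 1$ and indices $i_1,\dots,i_k$ with $u_{i_1}\cdots u_{i_k}=\varepsilon$ and $v_{i_1}\cdots v_{i_k}=\varepsilon$ in $\mathbf{F}_2$. Granting that ICP is undecidable, the passage to matrices is short. Fix a faithful homomorphism $\phi\colon\mathbf{F}_2\hookrightarrow SL(2,\mathbb{Z})$, for instance the one sending the two free generators of $\mathbf{F}_2$ to the matrices $\left(\begin{array}{cc}1&2\\0&1\end{array}\right)$ and $\left(\begin{array}{cc}1&0\\2&1\end{array}\right)$, which are well known to generate a free group, and attach to $(u_i,v_i)$ the block-diagonal matrix $M_i=\left(\begin{array}{cc}\phi(u_i)&0\\0&\phi(v_i)\end{array}\right)\in SL(4,\mathbb{Z})$. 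Since $\phi$ is an injective homomorphism, a product $M_{i_1}\cdots M_{i_k}$ equals the $4\times 4$ identity matrix exactly when both $u_{i_1}\cdots u_{i_k}$ and $v_{i_1}\cdots v_{i_k}$ reduce to $\varepsilon$; hence the identity matrix lies in the semigroup generated by $\{M_1,\dots,M_m\}$ if and only if the ICP instance has a solution, so undecidability transfers.

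The real work, and the main obstacle, is to prove that ICP is undecidable while keeping $m$ small. I would start from Post's Correspondence Problem, or its mixed modification, which is undecidable with a fixed small number of pairs, and translate each PCP pair into a pair of free-group words by padding with distinct ``border'' letters, using the inverses in $\mathbf{F}_2$ to force cancellations to respect the PCP block structure and to rule out the spurious collapses available over a free monoid. The delicate point is that an ICP solution need not be consistent when read from left to right, so one cannot directly recover a PCP solution from it; the standard remedy is to carry the encoded word in one coordinate and a ``guard'' word in the other, arranged so that simultaneous reduction to $\varepsilon$ in both coordinates is possible only for a genuine PCP match. The generator count is then one generator per PCP pair plus a constant number of auxiliary bracket generators, and squeezing it down to exactly eight requires both an economical ICP instance (via the known reductions on the number of PCP pairs) and an economical encoding that reuses the two free-group coordinates to carry both content and delimiters; this bookkeeping is the technically demanding part.

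To finish I would check the two directions explicitly. Forward: a PCP solution yields an ICP sequence, hence via $\phi$ a product of the $M_i$ equal to $I_4$, so $I_4\in\langle M_1,\dots,M_m\rangle$. Backward: any product $M_{i_1}\cdots M_{i_k}=I_4$ with $k\ge 1$ forces $\phi(u_{i_1}\cdots u_{i_k})$ and $\phi(v_{i_1}\cdots v_{i_k})$ to be the identity of $SL(2,\mathbb{Z})$, and faithfulness of $\phi$ gives $u_{i_1}\cdots u_{i_k}=v_{i_1}\cdots v_{i_k}=\varepsilon$, that is, a solution of the ICP instance and therefore of the underlying PCP instance. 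Since PCP is undecidable, so is the identity problem for the resulting eight-generator semigroup of $4\times 4$ integer matrices, as claimed.
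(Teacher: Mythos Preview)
The paper does not prove this statement; it is stated as a \emph{Fact} quoted from \cite{KNP17} and used as a black box to derive Corollary~\ref{corollary: sl4z}. There is therefore no ``paper's own proof'' to compare against.

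That said, your sketch is a faithful high-level outline of the argument behind the cited results: the block-diagonal embedding $\mathbf{F}_2\times\mathbf{F}_2\hookrightarrow SL(4,\mathbb{Z})$ via a Sanov-type pair in $SL(2,\mathbb{Z})$ is exactly how \cite{BP10} reduces the Identity Correspondence Problem to the matrix identity problem, and you correctly identify that the substantive difficulty lies in proving ICP undecidable with a small number of pairs so as to reach eight generators. Your forward and backward directions for the matrix reduction are correct as written. The only caveat is that the bound of eight in \cite{KNP17} relies on a rather delicate encoding that is not simply ``PCP plus a few brackets''; your sketch rightly flags this as the technically demanding part but does not supply it, so what you have is an accurate roadmap rather than a proof.
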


Using this result, we obtain the following corollary about the emptiness problem for extended finite automata over semigroups of $ \mathbb{Z}^{4 \times 4} $.

\begin{corollary}\label{corollary: sl4z}
	Let $ S $ be a subsemigroup of $ \mathbb{Z}^{4 \times 4}$ generated by eight matrices. The emptiness problem for $ S $-automaton is undecidable.
\end{corollary}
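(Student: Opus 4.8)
The plan is to derive the statement as the contrapositive of Theorem~\ref{theorem: ie} instantiated at a concrete semigroup supplied by Fact~\ref{fact: 4id}. Theorem~\ref{theorem: ie} shows that for every finitely generated semigroup $S$, if the emptiness problem for $S$-automata is decidable then the identity problem for $S$ is decidable. Reading this contrapositively: if the identity problem for $S$ is undecidable, then the emptiness problem for $S$-automata is undecidable.

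The second step is to quote Fact~\ref{fact: 4id}, which provides a subsemigroup $S \subseteq \mathbb{Z}^{4\times4}$ generated by eight matrices whose identity problem is undecidable. Since this $S$ is finitely generated, Theorem~\ref{theorem: ie} applies to it, and combining it with the contrapositive above we conclude that the emptiness problem for $S$-automata is undecidable for this eight-generator semigroup, which is exactly the assertion of the corollary.

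The only point to verify with any care is that the reduction underlying Theorem~\ref{theorem: ie} is uniform in the generating set: the automaton $V_2$ constructed in that proof labels its transitions directly by the $n$ generators of $S$, so specializing to $n = 8$ and to the generators furnished by Fact~\ref{fact: 4id} is immediate and preserves all the required properties (a string $a^k$ is accepted precisely when some length-$k$ product of the eight generators equals the identity matrix). Consequently I expect no genuine obstacle here; the corollary is an essentially immediate consequence of the two previously established results, the substantive content — the undecidability of the identity problem in dimension four for eight matrices — having already been absorbed into the cited Fact.
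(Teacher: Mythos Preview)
Your proposal is correct and mirrors the paper's own proof exactly: invoke Fact~\ref{fact: 4id} to get undecidability of the identity problem for an eight-generator subsemigroup of $\mathbb{Z}^{4\times 4}$, then apply the contrapositive of Theorem~\ref{theorem: ie}. The extra remark about uniformity in the generating set is harmless but not needed, since Theorem~\ref{theorem: ie} already applies to any finitely generated $S$.
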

\begin{proof}
	By Fact $ \ref{fact: 4id}$ we know that the identity problem for $ S $ is undecidable. By Theorem \ref{theorem: ie}, the result follows. 	
\end{proof}

\section{Universe problem for $ S $-automata}

In this section we prove some results connecting the universe problem for $ S $-automata and the subgroup membership and identity problems for $ S $.

\begin{theorem}\label{theorem: mu}
	
	Let $ H $ be a finitely generated subgroup of $ G $. If the universe problem for $ G $-automata is decidable, then the subgroup membership problem for $ H $ in $ G $ is decidable.
\end{theorem}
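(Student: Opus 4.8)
The plan is to imitate the reduction behind Theorem~\ref{theorem: me}, but to package the instance so that the constructed automaton accepts \emph{every} input exactly when $g \in H$, rather than accepting \emph{some} input exactly when $g \in H$. As in that proof, I would first assume (adjoining the inverses of the $h_i$ if necessary, which keeps the set finite) that the generating set $\{h_1,\dots,h_n\}$ of $H$ is closed under inversion, so that every element of $H$ is a product of the $h_i$. Given the element $g \in G$ whose membership in $H$ is in question, I would build a $G$-automaton $V_3$ over the one-letter alphabet $\Sigma = \{a\}$ with a two-state skeleton: $q_0$ (initial, non-accepting) and $q_1$ (the unique accepting state), with transitions $\delta(q_0,\varepsilon) \ni (q_1,g)$ (the only move out of $q_0$), an $\varepsilon$-loop $\delta(q_1,\varepsilon) \ni (q_1,h_i)$ for each $i$, and a single register-neutral input loop $\delta(q_1,a) \ni (q_1,1)$, where $1$ is the identity of $G$.

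Next I would prove the dichotomy $g \in H \Rightarrow L(V_3) = \Sigma^*$ and $g \notin H \Rightarrow L(V_3) = \emptyset$, which together give $g \in H \iff L(V_3) = \Sigma^*$. If $g \in H$, then $g^{-1} \in H$, so $g^{-1} = h_{i_1}\cdots h_{i_m}$ for some generators; on input $a^k$ the machine moves $q_0 \to q_1$ writing $g$, fires the $\varepsilon$-loops labelled $h_{i_1},\dots,h_{i_m}$ to bring the register to $g\,h_{i_1}\cdots h_{i_m} = 1$ without consuming input, then traverses the $a$-loop $k$ times leaving the register untouched, and halts in $q_1$; hence every $a^k$ (including $\varepsilon$) is accepted. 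Conversely, if $g \notin H$, any accepting run must end in $q_1$, hence must have passed from $q_0$ to $q_1$ (the only way to reach $q_1$), at which moment the register equals $g$; every later transition is a self-loop at $q_1$ that right-multiplies the register by an element of $H$ (a generator, or the identity), so the register thereafter lies in $gH$, which does not contain $1$; thus no run accepts. Since $V_3$ is computable from $g$ and the generators of $H$, running a decision procedure for the universe problem for $G$-automata on $V_3$ decides whether $g \in H$.

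The one place that needs genuine care is the use of the $\varepsilon$-loops to carry out the register computation. If the register were instead updated while reading input letters (as in the machine $V_1$ of Theorem~\ref{theorem: me}), then $a^k$ would be accepted only once $k$ is at least the length of some word over the generators representing $g^{-1}$, so $L(V_3)$ could be a proper cofinite subset of $\Sigma^*$ even when $g \in H$, and the reduction would break. Decoupling register manipulation (done by $\varepsilon$-moves) from input consumption (done by a loop that leaves the register alone) is exactly what forces $L(V_3)$ to be all of $\Sigma^*$ on one side of the dichotomy and empty on the other; the rest is a routine variant of Theorem~\ref{theorem: me}.
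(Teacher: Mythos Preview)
Your proof is correct and follows essentially the same approach as the paper: build a two-state $G$-automaton over the unary alphabet $\{a\}$ in which $\varepsilon$-loops at the accepting state labelled by the $h_i$ decouple the register computation from input consumption, so that $L=\Sigma^*$ iff $g\in H$. The only cosmetic differences are that the paper puts $g$ on the $a$-edge $q_1\to q_2$ (making both states accepting and also allowing $a$-loops labelled $h_i$), whereas you put $g$ on an $\varepsilon$-edge and keep the $a$-loop register-neutral; your version yields the clean $\emptyset$/$\Sigma^*$ dichotomy while the paper's yields $\{\varepsilon\}$/$\Sigma^*$, but either suffices for the reduction.
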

\begin{proof}
We are going to construct a $ G $-automaton $ V_2 $ and such that $g \in H $ iff $ L(V_3)=\Sigma^* $ where $ \Sigma=\{a\} $. $\{h_1,h_2,\dots h_n \} $ is the generator set for $ H $. 
 
	\begin{figure}[h]
		\centering
		\begin{tikzpicture}[x=0.75pt,y=0.75pt,yscale=-1,xscale=1]
		
		\draw    (190, 110) circle [x radius= 20, y radius= 20]  ;
		\draw    (290, 110) circle [x radius= 20, y radius= 20]  ;
		\draw    (290, 110) circle [x radius= 16, y radius= 16]  ;
		\draw    (190, 110) circle [x radius= 16, y radius= 16]  ;
		\draw    (208.3,109.6) -- (271,110) ;
		\draw [shift={(271,110)}, rotate = 180.37] [color={rgb, 255:red, 0; green, 0; blue, 0 }  ]   (0,0) .. controls (3.31,-0.3) and (6.95,-1.4) .. (10.93,-3.29)(0,0) .. controls (3.31,0.3) and (6.95,1.4) .. (10.93,3.29)   ;
		
		\draw    (290.15,90) .. controls (297.3,44) and (353.3,90) .. (309.3,101) ;
		\draw [shift={(309.3,101)}, rotate = 343.28999999999996] [color={rgb, 255:red, 0; green, 0; blue, 0 }  ]   (0,0) .. controls (3.31,-0.3) and (6.95,-1.4) .. (10.93,-3.29)(0,0) .. controls (3.31,0.3) and (6.95,1.4) .. (10.93,3.29)   ;
		
		\draw    (148.3,110) -- (170,109.6) ;
		\draw [shift={(170,109.6)}, rotate = 538.94] [color={rgb, 255:red, 0; green, 0; blue, 0 }  ]   (0,0) .. controls (3.31,-0.3) and (6.95,-1.4) .. (10.93,-3.29)(0,0) .. controls (3.31,0.3) and (6.95,1.4) .. (10.93,3.29)   ;

		\draw (190,110) node   {$q_{1}$};
		\draw (290,110) node   {$q_{2}$};
		\draw (237,100) node   {$a ,g$};
		\draw (325,50) node   {$a,h_{i}$};
		\draw (325,65) node   {$\varepsilon,h_{i}$};

		\end{tikzpicture}
		\caption{State transition diagram of $ V_3 $}
		\label{fig:mu}
	\end{figure}
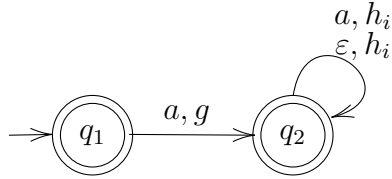
	
	The rest of the proof is similar to the proof of Theorem \ref{theorem: me} and omitted here.
	\end{proof}

\begin{theorem}\label{theorem: iu}
	Let $ S $ be a finitely generated semigroup. If the universe problem for $ S $-automata is decidable, then the identity problem for $ S $ is decidable. 
\end{theorem}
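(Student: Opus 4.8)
The plan is to adapt the two-state construction of Theorem~\ref{theorem: ie} in exactly the way Theorem~\ref{theorem: mu} adapts Theorem~\ref{theorem: me}: I would add $\varepsilon$-labelled self-loops so that the resulting $S$-automaton accepts \emph{every} word of $\{a\}^*$ precisely when the identity element can be realized as a product of generators. Concretely, let $\{s_1,\dots,s_n\}$ be a generating set of $S$ and let $V_4$ be the $S$-automaton over $\Sigma=\{a\}$ with an initial state $q_1$ that is also accepting, a second accepting state $q_2$, and transitions $\delta(q_1,a)=(q_2,s_i)$, $\delta(q_2,a)=(q_2,s_i)$ and $\delta(q_2,\varepsilon)=(q_2,s_i)$ for each $i\in\{1,\dots,n\}$. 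The $\varepsilon$-loops at $q_2$ are the only new ingredient over the automaton $V_2$ of Theorem~\ref{theorem: ie}; they let the register keep absorbing generators after the input has been exhausted.

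The heart of the proof is the equivalence ``$S$ contains the identity element if and only if $L(V_4)=\Sigma^*$''. For the forward direction, fix a representation $1=s_{i_1}\cdots s_{i_k}$ with $k\ge1$. The empty word is accepted since $q_1$ is accepting and the register still holds the identity. For $a^m$ with $m\ge1$, note that $(s_{i_1}\cdots s_{i_k})^m$ is a product of $mk\ge m$ generators that is again equal to $1$; the machine reads its $m$ input letters along $q_1\to q_2\to\cdots\to q_2$, takes $mk-m$ further $\varepsilon$-loops at $q_2$, and along the way multiplies the register by exactly the generators in that product, so it halts in the accepting state $q_2$ with register $1$. Hence $L(V_4)=\Sigma^*$. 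For the converse, $L(V_4)=\Sigma^*$ implies in particular $a\in L(V_4)$, and any accepting run on input $a$ multiplies the register by a nonempty product of generators that must equal the identity element, so $S$ contains the identity. Finally, given a decision procedure for the universe problem for $S$-automata, one decides the identity problem for $S$ by building $V_4$ and testing whether $L(V_4)=\Sigma^*$.

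I do not expect a genuine obstacle; the argument is a routine variation on Theorems~\ref{theorem: ie} and~\ref{theorem: mu}. The one place that needs care is the length bookkeeping in the forward direction: because the $a$-transitions also consume generators, $a^m$ can be accepted only if the identity is expressible as a product of at least $m$ generators, which is supplied by the powers $(s_{i_1}\cdots s_{i_k})^m$ in combination with the $\varepsilon$-loops. A smaller, essentially cosmetic, point is to confirm that the equivalence degrades gracefully when $S$ has no identity element at all: then the adjoined formal identity $1$ is distinct from every element of $S$, only $\varepsilon$ is accepted, $L(V_4)\ne\Sigma^*$, and the equivalence still holds.
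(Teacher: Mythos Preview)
Your proposal is correct and follows essentially the same approach as the paper: build an $S$-automaton over $\Sigma=\{a\}$ with $\varepsilon$-labelled generator loops so that $L=\Sigma^*$ iff some product of generators equals the identity, then invoke the universe oracle. The only cosmetic difference is that the paper's $V_4$ collapses your two states into a single initial-and-accepting state carrying both the $a,s_i$ and $\varepsilon,s_i$ self-loops; your two-state variant (modelled on $V_3$) works for the same reasons and your length-bookkeeping via $(s_{i_1}\cdots s_{i_k})^m$ is exactly what is needed.
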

\begin{proof}
	We are going to construct an $ S $-automaton $ V_4 $ such that $ S $ contains the identity element iff $ L(V_4)=\Sigma^* $ where $ \Sigma=\{a\} $. $ \{s_1,s_2,\dots s_n \} $ is the generator set for $ S $. 
	
	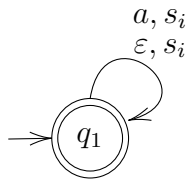
\begin{figure}[h]
		\centering

		\begin{tikzpicture}[x=0.75pt,y=0.75pt,yscale=-1,xscale=1]
		
		\draw    (290.15, 110) circle [x radius= 19.15, y radius= 20]  ;
		\draw    (290.14, 110) circle [x radius= 16.16, y radius= 16.5]  ;
		\draw    (290.15,90) .. controls (297.3,44) and (353.3,90) .. (309.3,101) ;
		\draw [shift={(309.3,101)}, rotate = 343.28999999999996] [color={rgb, 255:red, 0; green, 0; blue, 0 }  ]   (0,0) .. controls (3.31,-0.3) and (6.95,-1.4) .. (10.93,-3.29)(0,0) .. controls (3.31,0.3) and (6.95,1.4) .. (10.93,3.29)   ;
		
		\draw    (249.3,110.4) -- (271,110) ;
		\draw [shift={(271,110)}, rotate = 538.94] [color={rgb, 255:red, 0; green, 0; blue, 0 }  ]   (0,0) .. controls (3.31,-0.3) and (6.95,-1.4) .. (10.93,-3.29)(0,0) .. controls (3.31,0.3) and (6.95,1.4) .. (10.93,3.29)   ;
	\draw (290,110) node   {$q_{1}$};
\draw (325,50) node   {$a,s_{i}$};
\draw (325,65) node   {$\varepsilon,s_{i}$};

		\end{tikzpicture}
		\caption{State transition diagram of $ V_4 $}
		\label{fig:iu}
	\end{figure}
The rest of the proof is similar to the proof of Theorem \ref{theorem: ie} and omitted here.
\end{proof}

Let us note that the converses of Theorem \ref{theorem: mu} and \ref{theorem: iu} are not true. For a given pushdown automaton, an $\mathbf{F}_2 $-automaton recognizing the same language can be constructed \cite{Ka06}. It is a well known fact that the universe problem for pushdown automata is undecidable from which we can conclude that the universe problem for $\mathbf{F}_2 $-automaton is undecidable. On the other hand, $ \mathbf{F}_2 $ is a subgroup of $ SL(2,\mathbb{Z}) $ and the membership problem for $ SL(2,\mathbb{Z}) $ and thus the identity problem are known to be decidable \cite{KNP17}.

 \section{Future work}
 
%

It is still not known whether the membership and identity problems are decidable for semigroups of $ 3 \times 3 $ integer matrices. Recent results from $ \cite{KNP17} $ suggest that these problems are more likely to be decidable. We propose that investigating the decidability of the emptiness and universe problems for extended finite automata defined over $3 \times  3$ integer matrices is one possible way for obtaining results about the decision problems on these matrix semigroups.
 
\bibliographystyle{ocg}
\bibliography{references}

\EndOfArticle 

\end{document}